\newproof{proof}{Proof}
\newcommand{\eps}{\varepsilon}
\begin{document}
\title{Approximating Majority Depth\tnoteref{t1}}
\tnotetext[t1]{This research was supported by NSERC.}
\author[carleton]{Dan Chen and Pat Morin}

\address[carleton]{School of Computer Science, Carleton University, 1125 Colonel By Drive, Ottawa, CANADA, K1S~5B6}

\begin{abstract}
We consider the problem of approximating the majority depth (Liu and
Singh, 1993) of a point $q$ with respect to an $n$-point set, $S$,
by random sampling.  At the heart of this problem is a data structures
question: How can we preprocess a set of $n$ lines so that we can quickly
test whether a randomly selected vertex in the arrangement of these
lines is above or below the median level.  We describe a Monte-Carlo data
structure for this problem that can be constructed in $O(n\log n)$ time,
can answer queries in $O((\log n)^{4/3})$ expected time, and answers correctly
with high probability.
\end{abstract}

\maketitle

\section{Introduction}

A \emph{data depth measure} quantifies the centrality
of an individual (a point) with respect to a population (a point set).
Depth measures are an important part of multivariate statistics and many
have been defined, include Tukey depth \cite{t74}, Oja
depth \cite{o83}, simplicial depth \cite{l90}, majority depth \cite{ls93},
and zonoid depth \cite{dkm96}.  For an overview of data depth from a
statistician's point of view, refer to the survey by Small \cite{s90}.
For a computational geometer's point of view refer to Aloupis' survey
\cite{a06}.

In this paper, we focus on the bivariate majority depth measure.  Let $S$
be a set of $n$ points in $\R^2$.  For a pair $x,y\in S$, a \emph{major
side} of $x,y$ is a closed halfplane having both $x$ and $y$ on its
boundary and that contain at least $n/2$ points of $S$.  Note that,
in the degenerate case where the line through $x$ and $y$ is a halving
line, the pair $x,y$ has two major sides.  The \emph{majority depth}
\cite{ls93,s91} of a point, $q$, with respect to $S$, is defined as the
number of pairs $x,y\in S$ that have $q$ in a major side.

Under the usual projective duality \cite{e97}, the set $S$ becomes
a set, $S^*$, of lines; pairs of points in $S$ becomes vertices
in the arrangement, $A(S^*)$, of $S^*$; and $q$ becomes a line, $q^*$.
The \emph{median-level} of $A(S^*)$ is the closure of the set of points
on lines in $S$ that have exactly $\lfloor n/2\rfloor$ lines of $S$
above them.  Then the majority depth of $q$ with respect to $S$ is equal
to the number of vertices, $x$, in $A(S^*)$ such that
\begin{enumerate}
\item $x$ is above $q^*$ and $x$ is above the median level; or
\item $x$ is below $q^*$ and $x$ is below the median level.
\end{enumerate}

Chen and Morin \cite{cm11} present an algorithm for computing majority
depth that works in the dual.  Their algorithm works by computing the
median level, computing the intersections of $q^*$ with the median
level, and using fast inversion counting to determine the number, $t$,
of vertices of the arrangement sandwiched between $q^*$ and the median
level.  The majority depth of $q$ is then equal to $\binom{n}{2}-t$.
The running time of this algorithm is within a logarithmic factor of $m$,
the complexity of the median level.

The maximum complexity of the median level of $n$ lines has been the
subject of intense study since it was first posed.  The current best
upper bound is $O(n^{4/3})$, due to Dey \cite{d98} and the current best
lower bound is $n2^{\Omega(\sqrt{\log n})}$, due to T\'oth \cite{t00}
and tightened by Nivasch \cite{n08}.
The median level can be computed in time $O(\min\{m\log n,n^{4/3}\})$
\cite{bj02,c99}.  Thus, the worst-case running time of Chen and Morin's
majority depth algorithm is $\omega(n(\log n)^c)$ for any constant $c$,
but no worse than $O(n^{4/3}\log n)$.

It seems difficult for any algorithm that computes the exact majority
depth of a point to avoid (at least implicitly) computing the median
level of $A(S^*)$.  This motivates approximation by random sampling.
In particular, one can use the simple technique of sampling vertices of
$A(S^*)$ and checking, for each sampled vertex, whether
\begin{enumerate}
  \item it lies above or below $q^*$; and
  \item it lies above or below the median level of $S^*$.
\end{enumerate}
In the primal, this is equivalent to taking random pairs of points in
$S$ and checking, for each such pair, $(x,y)$, if, (1)~the closed
upper halfplane, $h_{xy}$, with $x$ and $y$ on its boundary, contains $q$
and (2)~if $h_{xy}$ contains
 $n/2$ or more points of $S$.

The former test takes constant time but the latter test leads to a
data structuring problem:  Preprocess the set $S^*$ so that one can
quickly test, for any query point, $x$, whether $x$ is above or below
the median level of $A(S^*)$.  (Equivalently, does a query halfplane, $h$,
contain $n/2$ or more points of $S$.)  We know of two immediate solutions
to this problem.  The first solution is to compute the median level
explicitly, in $O(\min\{m\log n,n^{4/3}\})$ time, after which any query
can be answered in $O(\log n)$ time by binary search on the x-coordinate
of $x$.  The second solution is to construct a half-space range counting
structure---a partition tree---in $O(n\log n)$ time that can count the
number of points of $S$ in $h_{xy}$ in $O(n^{1/2})$ time \cite{c12}.

The first solution is not terribly good, since Chen and Morin's algorithm
shows that computing the \emph{exact} majority depth of $q$ can be done
in time that is within a logarithmic factor of $m$, the complexity of
the median level.  (Though if the goal is to preprocess in order to
approximate the majority depth for many different points, then this
method may be the right choice.)

In this paper, we show that the second solution can be improved
considerably, at least when the application is approximating majority
depth.  In particular, we show that when the query point is a randomly
chosen vertex of the arrangement $A(S^*)$, a careful combination of
partition trees \cite{c12} and $\eps$-approximations \cite{mww93}
can be used to answer queries in $O((\log n)^{4/3})$ expected time.
This faster query time means that we can use more random samples which
leads to a more accurate approximation.

The remainder of this paper is organized as follows.  In
\secref{range-counting} we review results on range counting and
$\eps$-approximations and show how they can be used for approximate range
counting.  In \secref{fast-testing} we show how these approximate range
counting results can be used to quickly answer queries about whether
a random vertex of $S^*$ is above or below the median level of $S^*$.
In \secref{majority-depth} we briefly mention how all of this applies to
the problem of approximating majority depth.  Finally, \secref{conclusion}
concludes with an open problem.

\section{Approximate Range Counting}
\seclabel{range-counting}

In this section, we consider the problem of approximate range counting.
That is, we study algorithms to preprocess $S$ so that, given a closed
halfplane $h$ and an integer $i\ge 0$, we can quickly return an integer
$r_i(h,S)$ such that
\[
   \left| |h\cap S| - r_i(h,S)\right| \le i \enspace .
\]
This data structure is such that queries are faster when the allowable
error, $i$, is larger.

There are no new results in this section. Rather it is a review of two
relevant results on range searching and $\epsilon$-approximations that are
closely related, but separated by nearly 20 years.  The reason we do this
is that, without a guide, it can take some effort to gather and assemble
the pieces; some of the proofs are existential, some are stated in terms
of discrepancy theory, and some are stated in terms of VC-dimension.
The reader who already knows all this, or is uninterested in learning it,
should skip directly to \lemref{approx-range-counting}.

The first tools we need come from a recent result of Chan on optimal
partition trees and their application to exact halfspace range counting
\cite[Theorems 3.2 and 5.3, with help from Theorem~5.2]{c12}:

\begin{thm}\thmlabel{optimal-partition-tree}
  Let $S$ be a set of $n$ points in $\R^2$ and let $N\ge n$ be an integer.
  There exists a data structure that can preprocess $S$ in $O(n\log
  N)$ expected time so that, with probability at least $1-1/N$, for
  any query halfplane, $h$, the data structure can return $|h\cap S|$
  in $O(n^{1/2})$ time.
\end{thm}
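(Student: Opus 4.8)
The plan is to assemble this statement from Chan's optimal partition tree and his halfspace range-counting construction. First I would build a hierarchical \emph{partition tree} on $S$: recursively split the current subset of $n'$ points into $r$ parts, each of size $\Theta(n'/r)$ and each enclosed in a triangular cell, so that the partition has the \emph{low crossing number} property---every line meets only $O(\sqrt r)$ of the $r$ cells. Iterating this down to subsets of constant size produces a tree of depth $O(\log_r n)$; at every node I would also store the cardinality of the subset it represents. Chan's Theorem~3.2 supplies exactly such a partition, together with a randomized algorithm that computes the whole tree in $O(n\log n)$ expected time.

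Second, to answer a query halfplane $h$ whose boundary is the line $\ell$, I would descend the tree from the root: at each node, examine its $r$ child cells; if a cell lies entirely inside $h$, add the stored size of its subset to a running total; if it lies entirely outside $h$, ignore it; and if $\ell$ crosses the cell, recurse into that child (handling the $O(1)$ boundary cases of points lying on $\ell$ similarly). The returned count is exactly $|h\cap S|$ because each point is charged exactly once. Since only $O(\sqrt r)$ children are crossed at any node, the query cost obeys a recurrence of the form $Q(n)\le O(\sqrt r)\,Q(n/r)+O(r)$, which, for an appropriate choice of the branching parameter $r$, solves to $Q(n)=O(n^{1/2})$; this is Chan's Theorem~5.3 restricted to the plane.

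Third---this is where $N$ and the auxiliary Theorem~5.2 enter---I would amplify the success probability. A single randomized partition step fails to attain the promised $O(\sqrt r)$ crossing number only with small probability; by enlarging the random samples used inside the construction so that their sizes are governed by $N$ rather than by $n$, each of the $O(n)$ nodes of the tree can be made to succeed except with probability $O(1/N^2)$, say, and a union bound over all nodes then gives overall success probability at least $1-1/N$. The cost of these larger samples is an extra $O(\log N)$ factor in the work per node, which is exactly the gap between the $O(n\log n)$ expected time of Chan's basic construction and the $O(n\log N)$ expected time claimed here.

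The main obstacle is this last step: obtaining the $1-1/N$ guarantee while keeping preprocessing at $O(n\log N)$ and, simultaneously, ensuring the crossing bound holds at \emph{every} node so that the $O(n^{1/2})$ bound holds for \emph{all} halfplane queries at once. One cannot simply rerun a Las Vegas construction until it works, since checking the global low-crossing property is itself costly; the amplification has to be folded into each individual partition step, which is the role of Chan's Theorem~5.2. Everything else---the descent algorithm and its correctness, and the $O(n^{1/2})$ recurrence---is routine once the partition tree is in hand.
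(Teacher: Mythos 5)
The paper does not actually prove this statement: it is imported verbatim from Chan's work on optimal partition trees (his Theorems~3.2 and~5.3, with Theorem~5.2 supplying the $1-1/N$ confidence parameter), and the surrounding section explicitly disclaims any new results. Your reconstruction has the right overall shape --- a hierarchical partition with low crossing number, a descent query that adds stored counts for cells inside $h$ and recurses into crossed cells, and amplification of the success probability by tying the internal sample sizes to $N$ --- and your third paragraph correctly identifies why the guarantee must hold simultaneously for all halfplanes and why the amplification is folded into the construction rather than done by rerunning.

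The one place where your account goes wrong is the query-time analysis. The recurrence $Q(n)\le O(\sqrt r)\,Q(n/r)+O(r)$ does \emph{not} solve to $O(n^{1/2})$ for any choice of $r$: with a hidden constant $c>1$ in the $O(\sqrt r)$ term it yields $Q(n)=O\bigl(c^{\log_r n} n^{1/2}\bigr)=O(n^{1/2+\eps})$, which is exactly the bound of Matou\v{s}ek's classical partition theorem applied level by level. Eliminating that $\eps$ (or the residual $2^{O(\log^* n)}$ factor of Matou\v{s}ek's later refinement) is the entire point of Chan's ``optimal'' construction, and it is achieved not by tuning $r$ but by building the whole hierarchy at once and bounding the \emph{total} number of tree nodes crossed by any line globally, rather than compounding a per-level $O(\sqrt r)$ bound. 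Since you ultimately defer to Chan's Theorem~5.3 for the $O(n^{1/2})$ query time, the conclusion you reach is correct, but the mechanism you offer for it would not deliver the stated bound on its own; the difficulty you locate in the probability amplification actually lives in the crossing-number analysis.
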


We say that a halfplane, $h$, \emph{crosses} a set, $X$, of points if
$h$ neither contains $X$ nor is disjoint from $X$.  The partition tree
of \thmref{optimal-partition-tree} is actually a \emph{binary space
partition tree}.  Each internal node, $u$, is a subset of $\R^2$ and
the two children of a node are the subsets of $u$ obtained by cutting
$u$ with a line.  Each leaf, $w$, in this tree has $|w\cap S| \le 1$.
The $O(n^{1/2})$ query time is obtained by designing this tree so that,
with probability at least $1-1/N$, there are only $O(n^{1/2})$ nodes
crossed by any halfplane.

For a geometric graph $G=(S,E)$, the \emph{crossing number} of $G$ is the
maximum, over all halfplanes, $h$, of the number of edges $uw\in E$ such
that $h$ crosses the set $\{u,w\}$.  From \thmref{optimal-partition-tree} it is
easy to derive a spanning tree of $S$ with crossing number $O(n^{1/2})$
using a bottom-up algorithm:  Perform a post-order traversal of the
partition tree.  When processing a node $u$ with children $v$ and $w$,
add an edge to the tree that joins an arbitrary point in $v\cap S$ to
an arbitrary point in $w\cap S$.  Since a halfplane cannot cross any
edge unless it also crosses the node at which the edge was created,
this yields the following result \cite[Corollary~7.1]{c12}:

\begin{thm}\thmlabel{spanning-tree}
  For any $n$ point set, $S$, and any $N\ge n$, it is possible to compute,
  in $O(n\log N)$ expected time, a spanning tree, $T$, of $S$ that,
  with probability at least $1-1/N$, has crossing number $O(n^{1/2})$.
\end{thm}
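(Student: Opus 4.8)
The plan is to follow exactly the bottom-up construction sketched above and to verify its two required properties: that the resulting edge set is a spanning tree of $S$, and that its crossing number is controlled by the number of nodes of the partition tree that a halfplane crosses. First I would invoke \thmref{optimal-partition-tree} to build, in $O(n\log N)$ expected time, the binary space partition tree $\mathcal{T}$ on $S$; recall that with probability at least $1-1/N$ every halfplane crosses only $O(n^{1/2})$ of its nodes, that each region $u$ of $\mathcal{T}$ is a convex subset of $\R^2$, that the leaves partition $S$ into singletons (or empty sets), and that the children of an internal node are obtained by cutting its region with a line. Before traversing, I would prune the tree by discarding any subtree whose region contains no point of $S$ and contracting any internal node that has only one nonempty child; this is an $O(n)$-time pass, it leaves a tree in which every internal node has two children whose point sets are both nonempty, and it only removes nodes, so the node-crossing bound is preserved.

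Next I would carry out the post-order traversal: at each internal node $u$ with children $v$ and $w$, pick an arbitrary $x\in v\cap S$ and an arbitrary $y\in w\cap S$ and add the edge $xy$ to $T$. Connectivity and acyclicity follow by a straightforward induction on the traversal: if the edges created inside $v$ already span $v\cap S$ and those inside $w$ already span $w\cap S$, then adding the single edge $xy$ merges the two spanning trees into a spanning tree of $(v\cap S)\cup(w\cap S)=u\cap S$. At the root this yields a spanning tree of $S$ with exactly $n-1$ edges.

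For the crossing number, the key observation is that the edge $xy$ created at $u$ has both endpoints inside the region $u$, since $x\in v\cap S\subseteq u$ and $y\in w\cap S\subseteq u$. Hence if a halfplane $h$ has exactly one of $x,y$ in it --- i.e.\ $h$ crosses the set $\{x,y\}$ --- then $h$ contains a point of $u$ and misses a point of $u$, so $h$ neither contains $u$ nor is disjoint from $u$; that is, $h$ crosses $u$. Since distinct edges are created at distinct nodes, the number of edges of $T$ crossed by $h$ is at most the number of nodes of $\mathcal{T}$ crossed by $h$, which is $O(n^{1/2})$ with probability at least $1-1/N$. Finally, the pruning pass and the traversal each touch every node of $\mathcal{T}$ a constant number of times, so the total construction time is dominated by the $O(n\log N)$ cost of building $\mathcal{T}$.

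The only real subtlety --- and the step I expect to need the most care --- is the bookkeeping around degenerate nodes (empty regions, cuts that fail to separate any points, and points lying exactly on a cutting line), which is precisely why I insert the pruning and contraction pass up front; once every internal node genuinely splits its point set into two nonempty parts, both the induction and the ``crossed edge $\Rightarrow$ crossed node'' argument go through cleanly, and the high-probability guarantee is inherited verbatim from \thmref{optimal-partition-tree}.
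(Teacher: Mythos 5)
Your proposal is correct and follows exactly the argument the paper gives (in the paragraph preceding the theorem, which serves as its proof): build the optimal partition tree, do a post-order traversal adding one edge per internal node between arbitrary points of the two children, and observe that a halfplane crossing an edge must cross the node at which that edge was created, so the crossing number is inherited from the $O(n^{1/2})$ node-crossing bound. The extra care you take with pruning empty/degenerate nodes and the induction for connectivity is a faithful filling-in of details the paper leaves implicit.
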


A spanning tree is not quite what is needed for what follows.  Rather,
we require a matching of size $\lfloor n/2\rfloor$.  To obtain this,
we first convert the tree, $T$, from \thmref{spanning-tree} into a path
by creating a path, $P$, that contains the vertices of $T$ in the order
they are discovered by a depth-first traversal.  It is easy to verify that
the crossing number of $P$ is at most twice the crossing number of $T$.
Next, we take every second edge of $P$ to obtain a matching:

\begin{cor}\corlabel{matching}
  For any $n$ point set, $S$, and any $N\ge n$, it is possible to
  compute, in $O(n\log N)$ expected time, a matching, $M$, of $S$ of
  size $\lfloor n/2\rfloor$ that, with probability at least $1-1/N$
  has crossing number $O(n^{1/2})$.
\end{cor}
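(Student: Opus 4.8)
The plan is to carry out the two-step reduction sketched just above the statement: first convert the low-crossing-number spanning tree supplied by \thmref{spanning-tree} into a spanning path with only a constant-factor loss in crossing number, and then obtain the matching by keeping every second edge of that path. So I would begin by invoking \thmref{spanning-tree} to get, in $O(n\log N)$ expected time, a spanning tree $T$ of $S$ whose crossing number is $O(n^{1/2})$ with probability at least $1-1/N$; call this last event the good event and argue everything downstream conditioned on it. Root $T$ arbitrarily, run a depth-first search, and let $P=v_1v_2\cdots v_n$ be the path on vertex set $S$ in which the vertices appear in the order they are first discovered. Building $P$ from $T$ takes only $O(n)$ time, so the total running time is dominated by the $O(n\log N)$ cost of \thmref{spanning-tree}.

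The one step requiring an argument is that the crossing number of $P$ is at most twice that of $T$. For this I would use the standard property that, in a depth-first search, the vertices of any subtree of $T$ occupy a contiguous interval of the discovery sequence $v_1,\dots,v_n$. Fix an edge $e$ of $T$: deleting $e$ splits $T$ into two pieces, one of which is a subtree (rooted at the deeper endpoint of $e$), so by contiguity the sequence $v_1,\dots,v_n$ switches between the two pieces at most twice, and hence $e$ lies on the $T$-path between $v_i$ and $v_{i+1}$ for at most two indices $i$. Now fix a halfplane $h$ and suppose $h$ crosses the pair $\{v_i,v_{i+1}\}$, i.e.\ separates $v_i$ from $v_{i+1}$; then $h$ must cross at least one edge on the $T$-path joining $v_i$ to $v_{i+1}$. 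Charging the edge $v_iv_{i+1}$ of $P$ to such an edge of $T$, each edge of $T$ is charged at most twice, and on the good event $h$ crosses only $O(n^{1/2})$ edges of $T$; therefore $h$ crosses only $O(n^{1/2})$ edges of $P$. Maximizing over $h$, the crossing number of $P$ is $O(n^{1/2})$ on the good event.

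To finish, list the edges of $P$ in order along the path as $e_1,\dots,e_{n-1}$ and set $M=\{e_1,e_3,e_5,\dots\}$. This is a matching of $S$ of size $\lfloor n/2\rfloor$, and since $M\subseteq E(P)$ no halfplane crosses more edges of $M$ than of $P$, so $M$ has crossing number $O(n^{1/2})$ on the good event, which occurs with probability at least $1-1/N$. Combining the $O(n)$ post-processing with the $O(n\log N)$ expected time of \thmref{spanning-tree} gives the claimed bound. The only genuinely non-routine point is the subtree-contiguity argument bounding the crossing number of $P$; everything else is bookkeeping.
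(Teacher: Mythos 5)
Your proposal is correct and follows essentially the same route as the paper, which obtains the matching by converting the spanning tree of \thmref{spanning-tree} into a depth-first discovery path and taking every second edge; the paper simply asserts the factor-of-two loss in crossing number as ``easy to verify,'' whereas you supply the standard subtree-contiguity charging argument for it. No gaps.
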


The following argument is due to Matou\v{s}ek, Welzl and Wernsich
\cite[Lemma~2.5]{mww93}.  Assume, for simplicity, that $n$ is even and let
$S'\subset S$ be obtained by taking exactly one endpoint from each edge in
the matching $M$ obtained from \Corref{matching}.  Consider some halfplane
$h$,  and let $M^{I}_h$ be the subset of the edges of $M$ contained in
$h$ and let $M^{C}_h$ be the subset of edges crossed by $h$. Then
\[
     |h\cap S| = 2|M^{I}_h| + |M^{C}_h| \enspace .
\] 
In particular,
\[
     |h\cap S| - |M^{C}_h| \le 2|h\cap S'| \le |h\cap S| + |M^{C}_h| 
\]
Since $|M^C_h|\in O(n^{1/2})$, this is good news in terms of approximate
range counting;  the set $S'$ is half the size of $S$, but $2|h\cap
S'|$ gives estimate of $|h\cap S|$ that is off by only $O(n^{1/2})$.
Next we show that this can be improved considerably with almost no effort.

Rather than choosing an arbitrary endpoint of each edge in $M$ to
take part in $S'$, we choose each one of the two endpoints at random
(and independently of the other $n/2-1$ choices).  Then, each edge in
$M^{C}_h$ has probability $1/2$ of contributing a point to $h\cap S'$
and each edge in $M^{I}_h$ contributes exactly one point to $h\cap S'$.
Therefore,
\[
    \E[2|h\cap S'|]
      = 2\left(1|M^{I}_h| + \frac{1}{2}|M^{C}_h|\right) = |h\cap S| \enspace .
\]
That is, $2|h\cap S'|$ is an unbiased estimator of $|h\cap
S|$.  Even better: the error of this estimator is (2 times) a
binomial$(|M^{C}_h|,1/2)$ random variable, with $|M^{C}_h|\in O(n^{1/2})$.
Using standard results on the concentration of binomial random variables
(i.e., Chernoff Bounds \cite{c52}), we immediately obtain:
\[
   \Pr\{\left|2|h \cap S'| - |h\cap S|\right| \ge c n^{1/4}(\log N)^{1/2}\} 
       \le 1/N \enspace ,
\]
for some constant $c>0$.  That is, with probability $1-1/N$, $2|h\cap S'|$
estimates $|h\cap S|$ to within an error of $O(n^{1/4}(\log N)^{1/2})$.
Putting everything together, we obtain:

\begin{lem}\lemlabel{yeah}
  For any $n$ point set, $S$, and any $N\ge n$, it is possible to
  compute, in $O(n\log N)$ expected time, a subset $S'$ of $S$ of 
  size $\lceil n/2\rceil$ such that, 
  with probability at least $1-1/N$, for every halfplane $h$,
  \[
     \left|2|h\cap S'| - |h\cap S|\right| \in O(n^{1/4}(\log N)^{1/2}) \enspace .
  \]
\end{lem}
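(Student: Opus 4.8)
The plan is to assemble the pieces developed in this section into the single statement of the lemma. First I would apply \Corref{matching}, with parameter $2N$ in place of $N$, to obtain in $O(n\log N)$ expected time a matching $M$ of $S$ of size $\lfloor n/2\rfloor$ whose crossing number is $O(n^{1/2})$ with probability at least $1-1/(2N)$; call this latter event $\mathcal{E}$. I would then build $S'$ by independently keeping one of the two endpoints of each edge of $M$, each endpoint with probability $1/2$, and, when $n$ is odd, also adding to $S'$ the unique vertex of $S$ left unmatched by $M$; in every case $|S'|=\lceil n/2\rceil$. This post-processing costs only $O(n)$ additional time, so the construction still runs in $O(n\log N)$ expected time. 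Note that $\mathcal{E}$ is determined by $M$ alone and is therefore independent of the subsequent endpoint choices.

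Next I would fix a single halfplane $h$ and condition on a realization of $M$ for which $\mathcal{E}$ holds. As in the discussion preceding the lemma, writing $M^{I}_h$ and $M^{C}_h$ for the sets of edges of $M$ that are, respectively, contained in $h$ and crossed by $h$, we have $|h\cap S| = 2|M^{I}_h| + |M^{C}_h|$; moreover each edge of $M^{I}_h$ contributes exactly one point to $h\cap S'$, while each edge of $M^{C}_h$ contributes a point to $h\cap S'$ independently with probability $1/2$. Hence $2|h\cap S'| - |h\cap S| = 2Z - |M^{C}_h|$, where $Z$ is a binomial$(|M^{C}_h|,1/2)$ random variable, and on $\mathcal{E}$ we have $|M^{C}_h| = O(n^{1/2})$. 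A Chernoff bound then shows that, for any prescribed constant $a>0$, the constant $c$ can be chosen so that $\Pr\{\,\left|\,2|h\cap S'| - |h\cap S|\,\right| \ge c\,n^{1/4}(\log N)^{1/2}\,\} \le N^{-a}$.

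The one genuine subtlety --- the step I expect to be the main obstacle --- is upgrading this single-halfplane bound to a statement that holds \emph{simultaneously} for all of the infinitely many halfplanes $h$. The standard resolution is to observe that $|h\cap S|$ and $|h\cap S'|$ depend only on the subset $h\cap S$, and that the number of distinct sets of the form $h\cap S$, as $h$ ranges over all closed halfplanes, is $O(n^{2})$: it suffices to consider halfplanes whose bounding line passes through two points of $S$, together with a bounded number of limiting configurations. Picking one representative halfplane from each such class and taking a union bound, the probability that the error exceeds $c\,n^{1/4}(\log N)^{1/2}$ for some representative --- hence for some halfplane whatsoever --- is $O(n^{2})\,N^{-a}$. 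Taking $a$ large enough (e.g.\ $a=4$) and using $N\ge n$, this is at most $1/(2N)$. Combining with $\Pr\{\overline{\mathcal{E}}\}\le 1/(2N)$ by one last union bound gives total failure probability at most $1/N$, which is the bound claimed in the lemma.
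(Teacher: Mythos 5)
Your proposal is correct and follows essentially the same route as the paper: take the matching of \Corref{matching}, keep a uniformly random endpoint of each edge, observe that the error is (twice) a binomial$(|M^{C}_h|,1/2)$ variable with $|M^{C}_h|=O(n^{1/2})$, and apply a Chernoff bound. The only difference is that you make explicit the union bound over the $O(n^2)$ combinatorially distinct halfplanes (and the split of the failure probability with the crossing-number event), a step the paper leaves implicit when passing from its single-halfplane tail bound to the ``for every halfplane'' statement; this is a welcome clarification rather than a departure.
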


What follows is another argument by Matou\v{s}ek, Welzl and Wernisch
\cite[Lemma~2.2]{mww93}.  By repeatedly applying \lemref{yeah}, we
obtain a sequence of $O(\log n)$ sets $S_0\supset S_1\cdots\supset S_r$,
$S_0=S$ and $|S_j|=\lceil n/2^j\rceil$.  For $j\ge 1$, the set $S_j$
can be computed from $S_{j-1}$ in $O(2^{-j}n\log N)$ time and has the
property that, with probability at least $1-1/N$, for every halfplane $h$,
\begin{equation}
   \left|2^j|h\cap S_j| - |h\cap S|\right| \in O(2^{3j/4}n^{1/4}(\log N)^{1/2}) \enspace .
  \eqlabel{crap}
\end{equation}
At this point, we have come full circle.  We store each
of the sets $S_0,\ldots,S_r$ in an optimal partition tree
(\thmref{optimal-partition-tree}) so that we can do range counting
queries on each set $S_i$ in $O(|S_i|^{1/2})$ time.  This (finally)
gives the result we need on approximate range counting:
\begin{lem}\lemlabel{approx-range-counting}
  Given any set $S$ of $n$ points in $\R^2$ and any $N\ge n$, there exists
  a data structure that can be constructed in $O(n\log N)$ expected time
  and, with probability at least $1-1/N$, can, for any halfspace $h$
  and any integer $i\in\{0,\ldots,n\}$, return a number $r_i(h,S)$ such that
  \[  \left||h\cap S|-r_i(h,S)\right| \le i \enspace .\]
  Such a query takes $O(\min\{n^{1/2},(n/i)^{2/3}(\log N)^{1/3}\})$ expected time.
\end{lem}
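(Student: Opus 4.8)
The plan is to build on the sequence of sets $S_0 \supset S_1 \supset \cdots \supset S_r$ constructed by repeated application of \lemref{yeah}, each stored in an optimal partition tree via \thmref{optimal-partition-tree}. Given a query halfplane $h$ and an error budget $i$, the idea is to pick the coarsest set $S_j$ in the sequence whose sampling error (as guaranteed by \eqref{crap}) is still at most $i$, answer an exact range counting query $|h\cap S_j|$ on that set using its partition tree, and return $r_i(h,S) = 2^j|h\cap S_j|$. Since $|S_j| = \lceil n/2^j\rceil$, the exact query on $S_j$ costs $O((n/2^j)^{1/2})$ time, so the query time decreases geometrically as $j$ grows, and we want $j$ as large as the error constraint permits.

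First I would make the choice of $j$ precise. By \eqref{crap}, the error of the estimate $2^j|h\cap S_j|$ is $O(2^{3j/4}n^{1/4}(\log N)^{1/2})$, so to guarantee this is at most $i$ we need $2^{3j/4} \le c' i / (n^{1/4}(\log N)^{1/2})$ for a suitable constant, i.e. $2^j$ of order $(i/(n^{1/4}(\log N)^{1/2}))^{4/3}$. With this choice, $n/2^j$ is of order $n \cdot (n^{1/4}(\log N)^{1/2}/i)^{4/3} = n^{4/3}(\log N)^{2/3}/i^{4/3}$, so the query time $O((n/2^j)^{1/2})$ becomes $O(n^{2/3}(\log N)^{1/3}/i^{2/3}) = O((n/i)^{2/3}(\log N)^{1/3})$, as claimed. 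When $i$ is so small that this bound exceeds $n^{1/2}$ — equivalently, when the prescribed $2^j < 1$ — we simply take $j = 0$ and query the partition tree on $S$ itself, paying $O(n^{1/2})$ and incurring zero error; this accounts for the $\min$ in the statement. A symmetric remark handles the case $i=0$. I would also note that $r$ only needs to range over $O(\log n)$ values, and the target $j$ is always in this range (or clamped to it), which is why all $r+1$ partition trees suffice.

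Next I would assemble the correctness and timing guarantees. The construction time is $\sum_{j=0}^{r} O(2^{-j} n \log N) = O(n\log N)$ for the sample sets by \lemref{yeah}, plus $\sum_{j=0}^{r} O(|S_j| \log N) = O(n\log N)$ for the $O(\log n)$ partition trees by \thmref{optimal-partition-tree}; these sum to $O(n\log N)$ expected. For the failure probability: each of the $O(\log n)$ applications of \lemref{yeah} and each of the $O(\log n)$ partition trees fails with probability at most $1/N$; a union bound gives total failure probability $O((\log n)/N)$. To get the clean $1-1/N$ bound claimed, I would run the whole construction with the parameter $N' = N\log N$ (or a constant multiple of $N$, absorbing the logarithm since $\log n \le \log N \le N^{\eps}$), which changes the time and error bounds only by constant factors. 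Conditioned on no failure, every estimate $2^j|h\cap S_j|$ satisfies \eqref{crap} simultaneously for all $h$, so the returned value is within $i$ of $|h\cap S|$ for every query, and the partition-tree queries are all exact.

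The main obstacle is not any single deep step — all the heavy lifting is already done in \lemref{yeah} and \thmref{optimal-partition-tree} — but rather getting the bookkeeping exactly right: verifying that the exponent arithmetic in choosing $j$ really yields the $(n/i)^{2/3}(\log N)^{1/3}$ bound, handling the boundary regimes ($i = 0$, $i$ too small for any $j\ge 1$, $i \ge n$) so that the $\min$ and the integer $j\in\{0,\dots,r\}$ are always well-defined, and managing the union bound over $O(\log n)$ independent randomized structures without degrading the stated $1/N$ failure probability. I would present the $j$-selection and the exponent computation carefully and treat the rest as routine.
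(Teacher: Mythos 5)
Your proposal is correct and takes essentially the same approach as the paper's proof: store each $S_j$ in an optimal partition tree and, for a query $(h,i)$, query the largest $j$ for which the error guarantee of \eqref{crap} is at most $i$, falling back to exact counting on $S_0$ when $i$ is too small. Your additional care with the union bound over the $O(\log n)$ randomized structures addresses a detail the paper's proof glosses over, but it does not change the argument.
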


\begin{proof}
  The data structure is a sequence of optimal partition trees on the
  sets $S_0,\ldots,S_r$.  All of these structures can be computed in
  $O(n\log N)$ time, since $|S_0|=n$ and the size of each subsequent
  set decreases by a factor of 2.

  To answer a query, $(h,i)$, we proceed as follows: If $i\le n^{1/4}$,
  then we perform exact range counting on the set $S_0=S$ in $O(n^{1/2})$
  time to return the value $|h\cap S|$.  Otherwise, we perform range
  counting on the set $S_j$ where $j$ is the largest value that satisfies
  \[
      C2^{3j/4} n^{1/4}(\log N)^{1/2} \le i \enspace ,
  \]
  where the constant $C$ depends on the constant in the big-Oh notation
  in \eqref{crap}.  This means that $|S_j| = O((n/i)^{4/3}(\log N)^{2/3}))$
  and the query takes expected time
  \[
      O(|S_j|^{1/2}) = O((n/i)^{2/3}(\log N)^{1/3}) \enspace ,
  \]
   as required. \qed
\end{proof}

Our main application of \lemref{approx-range-counting} is a test that
checks whether a halfspace, $h$, contains $n/2$ or more points of $S$.

\begin{lem}\lemlabel{side-test}
  Given any set $S$ of $n$ points in $\R^2$ and any $N\ge n$, there
  exists a data structure that can be constructed in $O(n\log N)$
  expected time and, with probability at least $1-1/N$, can, for any
  halfspace $h$, determine whether $|h\cap S|\ge n/2$ or not.  Such a
  query takes expected time
  \[  Q(i) = \begin{cases}
        O(n^{1/2}) & \text{for $0\le i\le n^{1/4}$} \\
        O((n/i)^{2/3}(\log N)^{1/3}) & \text{otherwise, }
    \end{cases}
  \]
  where $i = ||h\cap S|-n/2|$.
\end{lem}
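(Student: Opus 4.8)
The plan is to build the data structure of \lemref{approx-range-counting} and use it adaptively: to decide whether $|h\cap S|\ge n/2$, it suffices to produce an estimate of $|h\cap S|$ whose error is strictly smaller than $i=||h\cap S|-n/2|$, since such an estimate lies on the same side of $n/2$ as $|h\cap S|$ itself. The difficulty is that $i$ is not known in advance — indeed $i$ is exactly the quantity whose sign (relative to $0$, after subtracting from $n/2$) we are trying to pin down. So the algorithm must ``guess'' successively smaller error tolerances until the answer becomes unambiguous.

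Concretely, I would proceed as follows. Start with a crude tolerance, say $i_0 = n^{1/2}$ (or any value for which the corresponding query is cheap), and call $r_{i_0}(h,S)$. If $|r_{i_0}(h,S) - n/2| > i_0$, then the true count $|h\cap S|$ is guaranteed to be on the same side of $n/2$ as $r_{i_0}(h,S)$, and we are done. Otherwise, halve the tolerance and repeat: at stage $k$ we query with tolerance $i_k = i_0/2^k$ and stop the first time $|r_{i_k}(h,S)-n/2| > i_k$. Because the true error is $i$, the stopping condition is met as soon as $i_k < i$ (in fact already when $i_k \le i/ \text{const}$ suffices for a clean inequality, but $i_k<i$ is the clean statement), so the loop terminates at stage $k^\* = O(\log(n/i))$, with $i_{k^\*}=\Theta(i)$.

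The running time is then the sum, over $k=0,\ldots,k^\*$, of the query times from \lemref{approx-range-counting} with parameter $i_k$. Since $i_k$ decreases geometrically, the per-query cost $O(\min\{n^{1/2},(n/i_k)^{2/3}(\log N)^{1/3}\})$ \emph{increases} geometrically in $k$, so the sum is dominated (up to a constant factor) by its last term, which is $O(\min\{n^{1/2},(n/i)^{2/3}(\log N)^{1/3}\})$. When $i\le n^{1/4}$ the first branch, costing $O(n^{1/2})$, is what the geometric sum collapses to; when $i>n^{1/4}$ it collapses to $O((n/i)^{2/3}(\log N)^{1/3})$. This matches the claimed bound $Q(i)$. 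Correctness ``with probability at least $1-1/N$'' is inherited directly: we build a single instance of the \lemref{approx-range-counting} structure, which with probability at least $1-1/N$ answers \emph{all} queries $r_i(h,S)$ correctly, and conditioned on that event every step above is deterministically valid.

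The main obstacle is purely bookkeeping: verifying that the geometric series of query costs sums to its largest term rather than blowing up by a $\log$ factor, and handling the boundary between the two regimes of \lemref{approx-range-counting} (the switch at $i\approx n^{1/4}$) so that the stated piecewise $Q(i)$ comes out cleanly. Neither is deep, but care is needed to ensure the constant $C$ from \eqref{crap} is absorbed so that ``$i_k < i$'' really does force $|r_{i_k}(h,S)-n/2|>i_k$; if one is nervous about the exact constant, replacing the stopping test by $|r_{i_k}(h,S)-n/2|>i_k$ and observing that it must fire by the time $i_k \le i/3$ (say) removes any ambiguity without changing the asymptotics.
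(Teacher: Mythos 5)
Your proposal is correct and follows essentially the same route as the paper: build the structure of \lemref{approx-range-counting} once, query it with geometrically decreasing tolerances, stop as soon as the returned estimate is farther than the current tolerance from $n/2$, and observe that the exponentially increasing query costs sum to the cost of the final query, which has tolerance $\Theta(i)$. The one quibble is that the stopping condition is only guaranteed to fire once the tolerance drops below $i/2$ rather than $i$ (the paper uses the threshold $n/2^j<i/2$), but your closing remark about using $i/3$ already covers this and it does not affect the asymptotics.
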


\begin{proof}
  As preprocessing, we construct the data structure of
  \lemref{approx-range-counting}.  To perform a query, we perform a
  sequence of queries $(h,i_j)$, for $j=0,1,2,\ldots$, where $i_j=n/2^j$.
  The $j$th such query takes $O(2^{2j/3}(\log N)^{1/3})$ time and the
  question, ``is $|h\cap S|\ge n/2$?'' is resolved once $n/2^j < i/2$.
  Since the cost of successive queries is exponentially increasing, this
  final query takes time $O(\min\{n^{1/2},(n/i)^{2/3}(\log N)^{1/3}\})$
  and dominates the total query time. \qed
\end{proof}

\section{Side of Median Level Testing}
\seclabel{fast-testing}

We are now ready to tackle the main problem that comes up in trying
to estimate majority depth by random sampling:  Given a range pair of
points $x,y\in S$, determine if there are more than $n/2$ points in
the upper halfspace, $h_{xy}$, whose boundary is the line through $x$
and $y$.  In this section, though, it will be more natural to work in
the dual setting.  Here the question becomes: Given a random vertex,
$x$, of $A(S^*)$, determine whether $x$ is above or below the median
level of $S^*$.  The data structure in \lemref{side-test} answers these
queries in time $O((n/i)^{2/3}(\log N)^{1/3})$ when the vertex $x$ is on the
$n/2-i$ or $n/2+i$ level.

Before proving our main theorem, we recall a result of Dey
\cite[Theorem~4.2]{d98} about the maximum complexity of a sequence
of levels.

\begin{lem}\lemlabel{dey}
 Let $L$ be any set of $n$ lines and let $s$ be the number of vertices
 of $A(L)$ that are on levels $k$, $k+1$,\ldots, or $k+j$.  Then, $s \in
 O(nk^{1/3}j^{2/3})$.
\end{lem}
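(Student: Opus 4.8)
The statement is \cite[Theorem~4.2]{d98}; here is the idea behind its proof. Dualizing the $n$ lines to $n$ points, a vertex of $A(L)$ on level $t$ becomes a \emph{$t$-edge} of the point set --- a segment spanned by two of the points whose supporting line has exactly $t$ of the points on one side --- so $s$ is the total number of $t$-edges with $t\in\{k,\ldots,k+j\}$, and the target is to bound this by $O(nk^{1/3}j^{2/3})$.

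The engine is the crossing-number method for $k$-sets. I would form the geometric graph $G$ on the $n$ points whose edges are exactly the $t$-edges with $t\in\{k,\ldots,k+j\}$, so that $G$ has $\Theta(s)$ edges on $n$ vertices. Two facts then bound $\operatorname{cr}(G)$: a Lov\'asz-type lemma limiting how many edges of the band a single line can cross, and the Welzl--Dey observation that the edges in question decompose into few convex chains, so that edges of a common chain never cross --- this is what pushes the estimate past what the Lov\'asz lemma alone gives. Combined with a Clarkson--Shor random-sampling step that calibrates the dependence on $k$, these yield a bound on $\operatorname{cr}(G)$ of order $nkj^{2}$; feeding this into the Crossing Lemma $\operatorname{cr}(G)=\Omega(s^{3}/n^{2})$ (valid once $s\ge 4n$, the case $s<4n$ being trivial since $k,j\ge 1$) and solving for $s$ gives exactly $s=O(nk^{1/3}j^{2/3})$.

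The one genuinely delicate point is the $j^{2/3}$. Summing Dey's single-level bound $O(n(t+1)^{1/3})$ over the band gives only $O(njk^{1/3})$, too large by a factor $j^{1/3}$; the elementary $(\le k+j)$-edge bound $O(n(k+j))$ is worse still when $j\ll k$; and a straight Clarkson--Shor reduction to a single level of a subsample collapses the width-$j$ band and recovers nothing better than $O(nk)$. The content of Dey's argument is precisely that a contiguous band of levels is far more constrained than the same number of arbitrary levels, and capturing this forces one to run the convex-chain and crossing-number estimate on the band as a whole rather than level by level. Since \cite{d98} carries this out in full, I would not reproduce it here.
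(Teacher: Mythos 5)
Your proposal matches the paper exactly in substance: the paper offers no proof of this lemma at all, simply quoting it as Theorem~4.2 of Dey \cite{d98}, which is precisely the citation you give, and your sketch of Dey's crossing-number argument (Lov\'asz lemma plus convex-chain decomposition plus Clarkson--Shor, fed into the Crossing Lemma with the arithmetic $s^3/n^2 = O(nkj^2)$ yielding $s=O(nk^{1/3}j^{2/3})$) is a faithful and internally consistent account of the cited proof. Deferring to \cite{d98} rather than reproducing the argument is exactly what the authors do, so nothing is missing.
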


We are interested in the special case of \lemref{dey} where $k=n/2-i$
and $j=2i$:

\begin{cor}\corlabel{dey}
  Let $L$ be any set of $n$ lines.  Then, for any $i\in\{1,\ldots,n/2\}$
  the maximum total number of vertices of $A(L)$ whose level is in
  $\{n/2-i,\ldots,n/2+i\}$ is $O(n^{4/3}i^{2/3})$.
\end{cor}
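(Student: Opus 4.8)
The plan is to obtain this as an immediate corollary of Lemma~\lemref{dey} via a single substitution. I would apply Lemma~\lemref{dey} with $k=\lceil n/2\rceil-i$ and $j=2i$, so that the levels $k,k+1,\ldots,k+j$ are exactly the levels $n/2-i,\ldots,n/2+i$ whose vertices we wish to bound (the rounding in ``the median level'' versus $n/2-i$ only affects constants). Lemma~\lemref{dey} then tells us that the number $s$ of such vertices satisfies $s\in O(nk^{1/3}j^{2/3})$.

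It then remains only to simplify. Since $k=\lceil n/2\rceil-i\le\lceil n/2\rceil\in O(n)$, we have $k^{1/3}\in O(n^{1/3})$; and $j^{2/3}=(2i)^{2/3}\in O(i^{2/3})$. Multiplying these bounds yields
\[
  s \in O\bigl(n\cdot n^{1/3}\cdot i^{2/3}\bigr) = O(n^{4/3}i^{2/3}) \enspace ,
\]
as claimed.

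There is essentially no obstacle here; the only point requiring a moment's care is the extreme case $i=n/2$, where $k=\lceil n/2\rceil-i$ may be $0$ and the ``$k^{1/3}$'' factor degenerates. In that case, however, the claimed bound $O(n^{4/3}(n/2)^{2/3})=O(n^2)$ is trivially true, since $A(L)$ has only $\binom{n}{2}$ vertices in total; so it suffices to apply the substitution for $i\le n/2-1$ (where $k\ge 1$) and dispatch $i=n/2$ separately. Since we are proving only an upper bound and always bound $k$ from above by $n/2$, any further off-by-one arising from floor/ceiling conventions is absorbed into the $O(\cdot)$.
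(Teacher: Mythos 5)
Your proof is correct and matches the paper's own derivation: the paper likewise obtains the corollary as the special case of \lemref{dey} with $k=n/2-i$ and $j=2i$, bounding $k^{1/3}$ by $O(n^{1/3})$ to get $O(n^{4/3}i^{2/3})$. Your extra care with the rounding and the degenerate case $i=n/2$ is fine but not something the paper bothers with.
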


\Corref{dey} is useful because it gives bounds on the distribution of
the level of a randomly chosen vertex of $A(S^*)$.


\begin{thm}\thmlabel{exp-side-test}
  Given any set, $L$, of $n$ lines and any $c>0$, there exists a data
  structure that can test if a point $x$ is above or below the median
  level of $L$.  For any constant, $c$, this structure can be made to
  have the following properties:
  \begin{enumerate}\setlength{\itemsep}{0mm}
    \item It can be constructed in
       $O(n\log n)$ expected time and uses $O(n)$ space;
    \item with probability
       at least $1-n^{-c}$, it answers correctly for all
       possible queries; and
    \item when given a random vertex of $A(L)$
       as a query, the expected query time is $O((\log n)^{4/3})$.
  \end{enumerate}
\end{thm}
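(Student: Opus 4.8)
The plan is to take the data structure of \lemref{side-test} essentially off the shelf and then prove that, \emph{when averaged over a random vertex of $A(L)$}, its $O(n^{1/2})$ per-query worst case collapses to $O((\log n)^{4/3})$; the averaging is powered by Dey's bound (\corref{dey}) on the number of vertices near the median level. Concretely, I would work in the dual as set up at the start of this section: let $S=L^*$ be the $n$-point dual of $L$, fix $N=n^{\alpha}$ for a constant $\alpha=\alpha(c)$ chosen large enough that $1/N\le n^{-c}$ and $\alpha\ge 2$, and build the structure of \lemref{side-test} on $S$. By that lemma this takes $O(n\log N)=O(n\log n)$ expected time; and since the structure is a geometric sequence of optimal partition trees on sets $S_0\supset S_1\supset\cdots$ with $|S_j|=\lceil n/2^j\rceil$, each of linear size, the total space is $O(\sum_j n/2^j)=O(n)$. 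With probability at least $1-1/N\ge 1-n^{-c}$ the structure is \emph{good}: every side-of-median-level query --- equivalently, every query ``is $|h\cap S|\ge n/2$?'' --- is answered correctly. This gives properties~1 and~2.

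For property~3, condition first on the good event. A query at a vertex $x$ of $A(L)$ on level $n/2\pm i$ dualizes to a halfplane $h$ with $i=|\,|h\cap S|-n/2\,|$, so by \lemref{side-test} it costs $Q(i)$ in expectation. Assuming $L$ is in general position, so that $A(L)$ has exactly $\binom{n}{2}=\Theta(n^2)$ vertices, the expected cost for a uniformly random vertex is $\frac{1}{\binom{n}{2}}\sum_x Q(i(x))$ with $i(x)=|\mathrm{level}(x)-n/2|$, and I would bound $\sum_x Q(i(x))$ by a dyadic decomposition on $i$. By \corref{dey}, the vertices with $i(x)\le n^{1/4}$ number $O(n^{4/3}(n^{1/4})^{2/3})=O(n^{3/2})$, and each costs $O(n^{1/2})$, contributing $O(n^2)$ in total. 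For a dyadic block $i(x)\in[2^k,2^{k+1})$ with $2^k>n^{1/4}$, \corref{dey} bounds the number of such vertices by $O(n^{4/3}2^{2k/3})$, while each costs $O((n/2^k)^{2/3}(\log N)^{1/3})$, so the contribution of the block is
\[
   O\bigl(n^{4/3}2^{2k/3}\cdot(n/2^k)^{2/3}(\log N)^{1/3}\bigr)=O\bigl(n^2(\log N)^{1/3}\bigr),
\]
the $2^k$ factors cancelling. Since there are only $O(\log n)$ such blocks, $\sum_x Q(i(x))=O\bigl(n^2(\log N)^{1/3}\log n\bigr)=O\bigl(n^2(\log n)^{4/3}\bigr)$, and dividing by $\binom{n}{2}$ gives an $O((\log n)^{4/3})$ expected query time on a good structure.

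The cancellation of the $2^k$ factors above is the crux, and it is precisely Dey's $O(nk^{1/3}j^{2/3})$ level-complexity bound (\lemref{dey}, \corref{dey}) that produces it --- any weaker bound would leave a polynomial factor behind --- so this is the step I expect to carry the argument. The one remaining loose end is the bad event, of probability at most $1/N$, on which \lemref{side-test} offers no guarantees; but a query there still performs only $O(\log n)$ range-counting queries, each traversing a partition tree of linear size, hence terminates in $O(n\log n)$ worst-case time. Thus the bad event adds at most $(1/N)\cdot O(n\log n)=O(n^{1-\alpha}\log n)=o(1)$ to the expected query time, which is absorbed into the $O((\log n)^{4/3})$ bound; degeneracies, if one wishes to permit them, are dispatched by a standard symbolic perturbation. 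Combining the two events yields property~3 and completes the proof.
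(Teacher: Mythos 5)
Your proposal is correct and follows essentially the same route as the paper's proof: build the structure of \lemref{side-test} with $N$ polynomial in $n$, then bound the expected query time by grouping vertices into dyadic bands around the median level and applying \corref{dey} so that the $2^{k}$ factors cancel against $Q(2^k)$, leaving $O(\log n)$ blocks of $O(n^2(\log N)^{1/3})$ each. Your explicit accounting for the bad event's contribution to the expectation is a small extra bit of rigor the paper leaves implicit, but it does not change the argument.
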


\begin{proof}
  The data structure is, of course, the data structure of
  \lemref{side-test} with $N=n^c$.    If the query vertex is chosen from
  level $n/2-i$ or $n/2-i$, the query time is upper-bounded by
  \[
        Q(i) \le \begin{cases}
          O(n^{1/2}) & \text{if $0\le i \le n^{1/4}$} \\
          O((n/i)^{2/3}(\log N)^{1/3})  & \text{otherwise} .
        \end{cases}
  \]
  For simplicity, assume that $n$
  is even.
  Let $n_i$ be the total number of vertices of $A(L)$ on levels $n/2-i+1$
  and $n/2+i-1$, with the convention that $n_i=0$ for all $i>n/2$.
  Then the expected query time of this data structure is at most
  \begin{eqnarray*}
    T
      & \le & \frac{1}{\binom{n}{2}}\sum_{i=1}^{n/2} n_i Q(i) \\
      & = & \frac{1}{\binom{n}{2}}\sum_{k=0}^{\lceil\log n\rceil}\left(\sum_{i=2^k}^{2^{k+1}-1} n_i Q(i)\right) \\
      & \le & \frac{1}{\binom{n}{2}}\sum_{k=0}^{\lceil\log n\rceil}O(n^{4/3}2^{2k/3})\cdot Q(2^{k}) \\
      & = & \frac{1}{\binom{n}{2}}\left(
              \sum_{k=0}^{\lfloor\log n^{1/4}\rfloor}O(n^{4/3}2^{2k/3})\cdot Q(2^{k}) 
              + \sum_{k=\lfloor\log n^{1/4}\rfloor+1}^{\lceil\log n\rceil}O(n^{4/3}2^{2k/3})\cdot Q(2^{k})\right) \\
      & = & \frac{1}{\binom{n}{2}}\left(O(n^{4/3+(1/4)2/3})  
              + \sum_{k=\lfloor\log n^{1/4}\rfloor+1}^{\lceil\log n\rceil}O(n^{4/3}2^{2k/3})\cdot O((n/2^k)^{2/3}(\log N)^{1/3}) \right) \\
      & = & \frac{1}{\binom{n}{2}}\left(O(n^{4/3+(1/4)2/3})  
              + \sum_{k=\log n^{1/4}+1}^{\log n}O(n^2(\log N)^{1/3}) \right) \\
      & = & O((\log N)^{4/3}) = O((\log n)^{4/3}) \enspace ,
  \end{eqnarray*}
  as required. \qed
\end{proof}

\section{Estimating Majority Depth}
\seclabel{majority-depth}

Finally, we return to our application, namely estimating majority depth.
For a set $S$ of $n$ points in $\subset\R^2$ and a point $q\in\R^2$,
let $d(q,S)$ denote the majority depth of $q$ with respect to $S$ and
let $p=d(q,S)/\binom{n}{2}$ denote the normalized majority depth of $q$.


\begin{thm}
  Given a set $S$ of $n$ points in $\R^2$ and any constant $c>0$, there
  exists a data structure that can preprocess $S$ using $O(n\log n)$
  expected time and $O(n)$ space, such that, for any point $q$ and any
  parameter $r\ge 1$, the data structure can compute, in $O(r(\log n)^{4/3})$
  expected time, a value $d'(q,S)$ such that
  \[
     \Pr\left\{\frac{|d'(q,S)-d(q,S)|}{d(q,S)} \ge \eps \right\} 
        \le \exp\left(-\Omega\left(\eps^2rp\right)\right) + n^{-c} \enspace .
  \]
  In particular, choosing $r=(c\log n)/(\eps^2p)$ yields a linear-sized
  data structure that, with probability at least $1-n^{-c}$, provides an
  $\eps$-approximation to $d(q,S)$ in $O(c(\log n)^{7/3}/(\eps^2p))$ time.
\end{thm}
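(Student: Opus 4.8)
The plan is to combine the side-of-median-level test from \thmref{exp-side-test} with a standard Monte-Carlo sampling argument. First I would set up the estimator: recall that the majority depth of $q$ equals the number of vertices $x$ of $A(S^*)$ that are ``on the correct side'', meaning $x$ is above $q^*$ and above the median level, or below both. So $p$ is exactly the probability that a uniformly random vertex of $A(S^*)$ is such a ``good'' vertex. The algorithm draws $r$ vertices of $A(S^*)$ independently and uniformly at random (equivalently, $r$ random pairs from $S$), and for each one checks in $O(1)$ time whether it is above or below $q^*$, and uses the data structure of \thmref{exp-side-test} (built with the same constant $c$, so it is correct on all queries with probability $1-n^{-c}$) to check whether it is above or below the median level. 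Let $Y$ be the number of sampled vertices that turn out to be good, and output $d'(q,S) = (Y/r)\binom{n}{2}$.

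Next I would do the probability analysis conditioned on the event $E$ that the median-level data structure answers all its queries correctly; by property~(2) of \thmref{exp-side-test}, $\Pr\{\overline{E}\} \le n^{-c}$. Conditioned on $E$, $Y$ is a sum of $r$ i.i.d.\ Bernoulli$(p)$ random variables, so $\E[Y\mid E] = rp$ and $d'(q,S)$ is an unbiased estimator of $d(q,S)$. A multiplicative Chernoff bound then gives
\[
  \Pr\left\{\frac{|d'(q,S)-d(q,S)|}{d(q,S)} \ge \eps \;\Big|\; E\right\}
     = \Pr\{|Y - rp| \ge \eps rp\} \le 2\exp\left(-\Omega(\eps^2 rp)\right) \enspace .
\]
Removing the conditioning by a union bound over $\overline{E}$ absorbs the additive $n^{-c}$ term and yields the stated bound (the constant in front of the exponential is swallowed by the $\Omega$). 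For the running time, preprocessing is that of \thmref{exp-side-test}, namely $O(n\log n)$ expected time and $O(n)$ space. Each of the $r$ queries consists of an $O(1)$-time side-of-$q^*$ test plus one side-of-median-level query on a uniformly random vertex of $A(S^*)$, whose expected cost is $O((\log n)^{4/3})$ by property~(3); by linearity of expectation the total expected query time is $O(r(\log n)^{4/3})$. Finally, substituting $r = (c\log n)/(\eps^2 p)$ makes the exponential term $\exp(-\Omega(c\log n)) = n^{-\Omega(c)}$, which is at most $n^{-c}$ after adjusting constants, giving overall failure probability $n^{-c} + n^{-c} = O(n^{-c})$ and query time $O(c(\log n)^{7/3}/(\eps^2 p))$.

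The one point requiring care -- and the main (mild) obstacle -- is the independence structure: the randomness in building the data structure of \thmref{exp-side-test} is fixed once at preprocessing time, and then the same structure answers all $r$ sample queries, so the query answers are not independent of each other in general. The clean way around this is exactly the conditioning above: condition on the high-probability event $E$ that the data structure is correct on \emph{every} possible query (property~(2) is a statement about all queries simultaneously, which is why it is phrased that way), after which the only remaining randomness is the independent choice of the $r$ sample vertices, and the Chernoff argument goes through verbatim. A secondary subtlety is that the $O((\log n)^{4/3})$ query-time bound in \thmref{exp-side-test} is itself an expectation over a random query vertex and over the structure's internal randomness; since we want expected total time we may again take expectations and invoke linearity, and no conditioning on $E$ is needed for the running-time claim.
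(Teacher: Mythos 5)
Your proposal is correct and follows essentially the same route as the paper: sample $r$ random vertices of $A(S^*)$, test each with the structure of \thmref{exp-side-test}, return $(Y/r)\binom{n}{2}$, apply a Chernoff bound to the binomial count, and add the structure's failure probability via a union bound. Your explicit conditioning on the event that the data structure answers every query correctly is a slightly more careful write-up of the same union-bound step the paper performs implicitly, so there is nothing substantively different to flag.
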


\begin{proof}
  The data structure is the one described in \thmref{exp-side-test}.
  Select $r$ random vertices of $A(S^*)$ (by taking random pairs of lines
  in $S^*$) and, for each sample, test if it contributes to $d(q,S)$.
  This yields a count $r' \le r$ where
  \[ 
     \E[r'] = rp \enspace .
  \]
  We then return the value $d'(q,S)=(r'/r)\binom{n}{2}$, so that
  $\E[d'(q,S)]=d(q,S)$, as required.

  To prove the error bound, we use the fact that $r'$ is a binomial$(p,r)$
  random variable.  Applying Chernoff Bounds \cite{c52} on $r'$ yields:
  \[
     \Pr\{|r' - rp| \ge \eps rp\} \le \exp(-\Omega(\eps^2rp)) \enspace .
  \]
  Finally, the algorithm may fail not because of badly chosen samples,
  but rather, because the data structure of \thmref{exp-side-test} fails.
  The probability that this happens is at most $n^{-c}$. Therefore,
  the overall result follows from the union bound. \qed
\end{proof}

\section{Conclusions}
\seclabel{conclusion}

Although the estimation of majority depth is the original motivation for
studying this problem, the underlying question of the tradeoffs involved
in preprocessing for testing whether a point is above or below the median
level seems to be a fundamental question that is still far from being answered.
In particular, we have no good answer to the following question:

\begin{op}
What is the fastest linear-space data structure for testing if an
arbitrary query point is above or below the median level of a set of
$n$ lines?
\end{op}

To the best of our knowledge, the current state of the art is partition
trees, which can only answer these queries in $O(n^{1/2})$ time.

\section*{Acknowledgement}

In the preliminary version of this paper \cite{cm12}, the proof of
\thmref{exp-side-test} involved a much longer calculation.  The authors
are grateful to an anonymous referee who pointed out the trick of grouping
into powers of two, which simplifies the proof into its current form.

\section*{References}
\bibliographystyle{elsarticle-num-names}
\bibliography{majapx}

\end{document}